\definecolor{coolblack}{rgb}{0.0, 0.18, 0.39}
\newtheorem{proposition}{Proposition}
\newtheorem{corollary}{Corollary}
\newcommand{\ubar}[1]{\text{\b{$#1$}}}
\DeclareMathOperator*{\argmax}{arg\,max}
\begin{document}
	
	\pagestyle{plain}
	
	\author{Federico Vaccari\thanks{Department of Economics, University of Bergamo, Italy. E-mail: \href{mailto:vaccari.econ@gmail.com}{\sf vaccari.econ@gmail.com}. I am grateful to Yair Antler, Daniele Condorelli, Harry Di Pei, Edoardo Grillo, Navin Kartik, Wataru Kitano, Matthew Mitchell, Santiago Oliveros, Federico Quartieri, Patrick Rey, and anonymous referees for helpful comments and suggestions. I also thank participants at the 2022 YETI Workshop, 2023 Oligo Workshop, APSA's Formal Theory Virtual Workshop, University of Chieti-Pescara, and the 2023 Spring Workshop on Economic Theory at LUISS. All errors are mine.}}

	\title{Efficient Communication in Organizations}

	\date{ }
	
	\maketitle

	\begin{abstract}
				\noindent This paper studies the organization of communication between biased senders and a receiver. Senders can misreport their private information at a cost. Efficiency is achieved by clearing information asymmetries without incurring costs. Results show that only one communication protocol is efficient, robust to collusion, and free from unnecessary complexities. This protocol has a simple, adversarial, and public structure. It always induces efficient equilibria, for which a closed-form characterization is provided. The findings are relevant for the design of organizations that seek to improve decision-making while limiting wasteful influence activities.
	\end{abstract}

	\noindent {\bf JEL codes:} D23, D82, D83. 
	
	\noindent {\bf Keywords:} Information, communication, organizations, efficiency, costly talk.
	
	
	
	\clearpage

\section{Introduction}

Communication is an essential part of organizations. Decision-makers (owners, top management) often rely on better-informed parties within the same firm (lower-level managers, division heads) to provide relevant information. An agency problem arises when the informed parties' objectives are not aligned with the decision-makers'. In these cases, there are two potential sources of inefficiency. First, informed agents may dissipate considerable resources in influence activities. Second, decision-makers may take wrong decisions due to being poorly informed or swayed. As a result, organizations must structure communication to minimize wasteful influence activities while maximizing decision-makers' accuracy. This paper is concerned with communication protocols that achieve this goal.

To analyze communication protocols, I study a costly signaling game between an uninformed receiver and one or more informed senders. The receiver must select one of two actions. Senders know which action is better for the receiver. Before making a decision, the receiver interacts with the senders in a way pre-determined by a communication protocol. When communicating, senders can misreport information at a cost that is tied to the size of the lie. These ``misreporting costs'' represent the resources senders allocate to influence the receiver's decision. For example, a manager who shifts subordinates' labor to fabricate data will have fewer resources for the organization's production activities. As a result, the manager may underperform or incur penalties. From the organization's standpoint, these influence activities are unproductive and wasteful. 

The paper's central finding is that, among all possible arrangements, only one protocol has three desirable properties at the same time: it is efficient, robust to collusion, and minimal. This protocol, referred to as \emph{public advocacy}, consists of the sequential and public consultation of agents with opposed interests. It is a remarkably simple mechanism requiring neither commitment nor mediation. The results concern uniqueness as well as existence: a full equilibrium characterization shows that there always exists an efficient and robust equilibrium induced by the public advocacy protocol.

The notion of efficiency considered here is natural: it requires that all players obtain their complete-information payoff. An efficient protocol is critical in organizations because it solves asymmetric information problems without dissipating resources in unproductive signaling. Likewise, collusion is a common concern in organizations where members can discuss their intentions with each other, and it is relevant for protocols with multiple senders. Minimality requires protocols to have a simple structure, defined solely by the senders' biases and confidentiality. All else being equal, organizations typically prefer adopting simpler rather than complicated protocols.

Communication protocols specify the organization's mode of communication: how many senders to consult, their relative standing over decision-making, and the confidentiality of their recommendations. For example, a protocol may instruct the receiver to consult only one sender that favors a particular action (such as the head of a division that would benefit from a specific investment). Alternatively, it may specify to consult in a private setting two senders with aligned goals (e.g., members of the same department). In this last case, it is crucial to ensure that communication is not compromised by senders' collusion.

To study collusion, I use the notion of ``coalition-proofness'' developed by \cite{bernheim1987coalition}. This solution concept allows testing whether a protocol remains effective when players engage in non-binding pre-play communication. It allows senders to discuss their strategies before consultation, but not to make commitments. For example, managers may share their intentions with each other before filing a report to the CEO. Even though managers are unable to make credible and binding commitments, they may still be able to effectively coordinate their actions. As we shall see, coalition-proofness turns out to be crucial for the main result.

The first part of the paper establishes necessary conditions for efficiency and robustness to collusion. An absence of communication results in inaccurate decision-making, whereas single-sender arrangements always result in wasteful persuasion attempts. It follows that multi-sender protocols are necessary for efficiency.  They are also sufficient because, in the absence of pre-play communication, senders cannot coordinate persuasion, allowing the receiver to exploit this inability by privately consulting more than one sender. However, private communication protocols become susceptible to collusion if senders have the opportunity to discuss their strategies  with each other before being consulted. In such cases, senders can restore coordination, undermining the protocol's effectiveness. No private protocol can simultaneously achieve efficiency and resilience to collusion.

Public protocols prescribe consulting senders through a sequential and public procedure. These arrangements are neither efficient nor robust to collusion when senders have relatively aligned preferences over decision-making. In these cases, there are always contingencies where senders can coordinate persuasion through pre-play communication.

The second part of this paper focuses on the last type of minimal arrangement left to analyze: public advocacy, that is, the sequential and public consultation of senders with conflicting interests over decision-making. The main result shows that public advocacy is efficient and robust to collusion. Importantly, it is the only minimal communication protocol to have these desirable properties. A characterization of the efficient equilibrium is provided, showing the mechanism through which the receiver achieves efficiency: the report delivered by the first speaker sets the burden of proof borne by the second speaker, who has to prove its case ``beyond a reasonable doubt.'' The endogenously determined burden of proof ensures that both senders consistently report truthfully. As a result, the receiver learns their private information and makes fully informed decisions. No resources are wasted in the attempt to persuade the receiver. All players obtain the payoff they would get if there were no information asymmetries in the first place. 

The rest of the paper is organized as follows. Section~\ref{sec:lit} reviews the related literature, and Section~\ref{sec:model} presents the model. The main results are in Section~\ref{sec:protocols}. Section~\ref{sec:disc} discusses the model's assumptions and the robustness of the results. Finally, Section~\ref{sec:conc} concludes.

\section{Literature}\label{sec:lit}

This paper contributes to the literature on organization design. In this line of work, \cite{milgrom1988employment} recognizes that influence activities constitute a direct opportunity cost for organizations. \cite{milgrom1988employment} focuses on restrictions of decision makers' discretion as a tool to limit these costs. Differently, the current paper is concerned with the design of communication protocols that eliminate influence activities. \cite{jehiel1999} and \cite{deimen2019} also study optimal information structures in organizations. In contrast with the current paper, they consider settings where there is no disagreement between players. Without an agency problem, influence activities are not a concern.

A strand of literature finds that optimal organization design results in advocacy structures. \cite{dewatripont1999advocates} study the optimal provision of incentives for information gathering. They consider settings with verifiable information and no agency problem. \cite{battaglini2002} takes a different approach by studying strategic communication with biased senders in a cheap talk framework. Both \cite{dewatripont1999advocates} and \cite{battaglini2002} make a case for ``static'' types of advocacy. By contrast, \cite{krishna2001exp} show the optimality of rebuttals in a cheap talk model where senders engage in an extended debate. The current work differs from these papers in three key aspects. First, it is a model of partially verifiable information.\footnote{As we shall see, the presence of costly communication implies that each sender $j$ can reasonably deliver only reports within a set $M_j(\theta)$, which depends on the true state $\theta$ and always includes $\theta$ itself. This approach to partially verifiable information is analogous to that in~\cite{green1986partially}.} Second, it studies settings where influence activities yield direct costs to the organization that are not informational. Third, the analysis includes a wider array of arrangements, showing that public advocacy is the only communication protocol with several desirable characteristics.\footnote{The organization's problem considered here cannot be appreciated by models where information is not partially verifiable: influence activities are impossible when information is fully verifiable, and they come at no cost when information is not verifiable as in cheap talk models.}

In this model, influence activities are costly due to misreporting costs. The signaling structure considered here relates this paper to the ``costly talk'' literature \citep{kartik2007,kartik2009,ottaviani2006}. These papers are mainly concerned with the single-sender case. \cite{vaccari2023competition} considers the case where two conflicting senders communicate simultaneously. As in traditional signaling models, the equilibria of these settings involve wasteful signaling expenditures, and thus are inefficient. Differently from all these papers, the current work considers a larger class of protocols and equilibria. The role of collusion, which is central here, is absent in this literature. 

Efficient outcomes are extremely rare in settings where information is conveyed through costly actions. In canonical signaling models, separating outcomes clear information asymmetries, but involve wasteful signaling costs \citep{spence1973jobmarketsignaling}. There are a few exceptions. \cite{bagwell1991} study a limit pricing model with two incumbent firms. They show the existence of an equilibrium that yields the complete-information outcome. Likewise, \cite{emons2009accuracy} construct an efficient equilibrium in an adversarial model of costly communication.\footnote{To the best of my knowledge, those are the only two examples of efficient equilibria in settings where information is conveyed through costly actions. By contrast, efficiency is easier to obtain in cheap talk and disclosure games, where costs are not a concern by default. For example, see \cite{battaglini2002} and \cite{milgrom1986}.} Results in the current paper show that efficiency can always be obtained in a large class of games with costly communication. However, only one specific type of mechanism can reliably yield efficient outcomes.

Finally, a literature studies mechanisms to elicit correlated information \citep{cremer1988full,mcafee1992correlated}. \cite{laffont2000mechanism} do so in settings where agents can enter collusive agreements. Similarly, the current paper focuses on the elicitation of perfectly correlated information from multiple senders that can collude. Differently, it studies efficiency in environments where communication is costly, and the cost of reports is type-dependent. Due to this last feature, the revelation principle and the mechanism design approach \citep{myerson1983mechanism,myerson1986multistage} cannot be used here.\footnote{I thank Patrick Rey for this observation.} There are mechanism design papers addressing costly misreporting, such as those by \cite{deneckere2022} and \cite{kartik2012implementation}. The latter focuses on full implementation while requiring no costs in equilibrium. In this paper, the receiver is not subject to the same level of commitment assumed by the mechanism design approach.

\section{The model}\label{sec:model}

There are $N\geq 1$ senders in the set $S=\{1,\ldots, N\}$, and one receiver ($r$). Nature selects a state $\theta$ according to some distribution $F$ with density $f$ and full support in $\Theta = \mathbb{R}$. Only the senders observe the realized state. Then, depending on the setting, communication takes place either sequentially or simultaneously. In a sequential protocol, the order of communication is determined by senders' indexes $j\in S$, with sender~1 reporting first, and so on. When communicating, sender~$j\in S$ delivers a report $r_j\in\Theta$ with the literal or exogenous meaning ``the state is $\theta=r_j$.'' After observing all the senders' reports, the receiver selects an alternative $a$ in the binary set $\{a^+, a^-\}$.

\emph{Payoffs.---} Player $i\in S\cup\{r\}$ obtains a payoff of $u_i(a,\theta)$ when the receiver selects action $a$ in state $\theta$. Define player $i$'s payoff-difference in state $\theta$ as $\Delta u_i(\theta):=u_i(a^+,\theta)-u_i(a^-,\theta)$. I assume that $\Delta u_i(\theta)$ is weakly increasing in $\theta$ for every $i\in S\cup\{r\}$. Under this assumption, the state can be interpreted as a vertical differentiation parameter measuring the relative appeal of $a^+$ with respect to $a^-$. I normalize the receiver's payoffs by setting $\Delta u_r(\theta)>0$ for all $\theta>0$, $\Delta u_r(\theta)<0$ for all $\theta<0$, and $\Delta u_r(0)\geq 0$. For every sender $j\in S$, either $\Delta u_j(\theta)>0$ for some $\theta<0$, or $\Delta u_j(\theta)<0$ for some $\theta>0$. I say that two senders $j$ and $i$ are opposed-biased if $\Delta u_i(0)\cdot \Delta u_j(0)<0$, and are like-biased otherwise.

\emph{Costly talk.---} Sender $j$ incurs a finite cost $C_j(r_j,\theta)$ when reporting $r_j$ in state $\theta$. The cost function $C_j$ is continuous in its arguments and strictly increasing in $|r_j-\theta|$, with $C_j(\theta,\theta)=0$ for every $\theta\in\Theta$. That is, misreporting is increasingly costly in the size of the lie, whereas truthful reporting is costless. Moreover, $C_j(r_j,\theta)\to\infty$ as $|r_j-\theta|\to\infty$, reflecting that extreme lies are prohibitively costly. Additionally, for every $r_j\in\Theta$, $C_j(r_j,\theta)>C_j(r_j,\theta')$ if $|r_j-\theta|>|r_j-\theta'|$. That is, reports are cheaper when delivered from closer states. Sender $j$'s total utility is $w_j(r_j,a,\theta)=u_j(a,\theta)-C_j(r_j,\theta)$.

\emph{Strategies.---} When communication is simultaneous, a pure strategy for sender~$j$ is a function $\rho_j:\Theta\to\Theta$. When communication is sequential, a pure strategy for sender~$j$ is a function $\rho_j:\Theta^{j}\to\Theta$. A profile of reports $\{r_j\}_{j=1}^N$ is off-path if, given the senders' strategies, $\{r_j\}_{j=1}^N$ will never be observed by the receiver. A posterior belief function for the receiver is a mapping $p:\Theta^N \to \Delta(\Theta)$ that, given the senders' reports, generates posterior beliefs $p(\theta\mid\{r_j\}_{j\in S})$. Given the senders' reports and posterior beliefs $p$, the receiver selects an action in the sequentially rational set $\beta$, where
\[
\beta(\{r_j\}_{j\in S})=\argmax_{a\in\{a^+,a^-\}} \mathbb{E}_p \left[u_r(a,\theta)\mid\{r_j\}_{j\in S}\right].
\]
I assume that the receiver selects $a^+$ when indifferent between the two alternatives.

\emph{Equilibrium.---} The equilibrium concept is the perfect Bayesian equilibrium (PBE). To test for the protocols' robustness against collusion, I use the two related concepts of strong Nash equilibrium \citep{aumann1959} and coalition-proof Nash equilibrium \citep{bernheim1987coalition}. An equilibrium is \emph{strong} if no coalition of players can jointly deviate so that all players in the coalition get strictly better payoffs. An equilibrium is \emph{coalition-proof} if it is robust to coalitional deviations that are \emph{self-enforcing}. A coalition is self-enforcing if no proper sub-coalition exists in which all its members, while treating the actions of the complement as fixed, can unanimously agree to deviate from the proposed coalitional deviation in a way that makes each member of the sub-coalition strictly better off.\footnote{The notion of coalition-proofness considers group deviations that are consistent with the model's non-cooperative framework. For a formal definition of strong and coalition-proof Nash equilibrium, see \cite{aumann1959} and \cite{bernheim1987coalition}, respectively. For a textbook definition of perfect Bayesian equilibrium, see \cite{fudenberg1991game}.} I refer to equilibria where senders always report truthfully as \emph{truthful}, and to equilibria where the receiver always learn the state as \emph{fully revealing}.

\subsection{Definitions}

Here I define concepts that are useful for the analysis that follows. A discussion of the model's assumptions ensues in Section~\ref{sec:model_disc} and \ref{sec:disc}.

\emph{Reach.---} Misreporting costs significantly affect information transmission, as certain reports cannot be profitably delivered from specific states. The notion of ``reach'' captures how far a sender can misreport before incurring a sure loss. Since senders can misreport by either exaggerating or understating the realized state, it is necessary to define two sender-specific, state-dependent thresholds---one for each direction of the lie. Reports that exceed these thresholds at a given state are unprofitable.

 I define the reach of sender~$j$ in state $\theta$ as the report which associated cost offsets $j$'s potential gains. Formally, the reach of sender $j$ in state $\theta$ is $\bar r_j(\theta):=\max\{r_j\in \mathbb{R} \text{ s.t. } |\Delta u_j(\theta)|=C_j(r_j,\theta)\}$ if $\Delta u_j(0)>0$, and it is $\ubar r_j(\theta):=\min\{r_j\in\mathbb{R} \text{ s.t. } |\Delta u_j(\theta)|=C_j(r_j,\theta)\}$ if $\Delta u_j(0)<0$. The reach is computed under the condition that the sender's report is persuasive. Intuitively, in equilibrium sender $j$ will never deliver reports higher than $\bar r_j(\theta)$ or lower than $\ubar r_j(\theta)$, as these reports are strictly dominated by truthful reporting independently of the receiver's decision.

\emph{Protocols and minimality.---} A communication protocol describes how communication between senders and the receiver takes place. I say that a protocol is \emph{minimal} if it can be described just by the senders' relative bias (i.e., opposed- or like-biased) and by the confidentiality of their reports (i.e., public or private).\footnote{Similarly, in \citet[p.~1400]{goltsman2009mediation} a protocol is ``the sequence in which the parties can talk, and whether their messages are public or private.''} Single-sender protocols, analyzed in Proposition~\ref{prop:one}, do not need a description of relative bias and confidentiality. Table~\ref{tab:proto} illustrates all minimal multi-sender protocols and refers to their respective results. Section~\ref{sec:disc} discusses more complicated protocols.

I refer to protocols that are not minimal as \emph{complicated}. Complicated arrangements can prescribe, for example, third-party mediation, arbitration, rebuttals, or a complex mix of private and public communication. There are three reasons why the paper emphasizes minimal protocols. First, to avoid redundancies. Section~\ref{sec:disc} discusses how complicated protocols that yield robust and efficient outcomes can be simplified to a minimal protocol while preserving robustness and  efficiency. The second reason is a cost-based rationale. In complicated protocols, the receiver \emph{must} scrutinize at least three reports. Results in Section~\ref{sec:protocols} show that efficiency can be obtained with two senders only, each delivering a single report. Third, minimality is appealing because of its simplicity. In light of this discussion and the existence result in Proposition~\ref{prop:eqm}, an analysis of complicated protocols, while potentially interesting, is necessarily of second-order importance.

More formally, a protocol with $N\geq 1$ senders is minimal if it is fully characterized by
\begin{itemize}[noitemsep,topsep=0pt]
	\item[i)] {The \emph{bias configuration}:} the protocol specifies the bias relationship, either like- or opposed-biased, for all pairs of senders $i,j\in S=\{1,\ldots, N\}$, with $i\neq j$;
	\item[ii)] {The \emph{confidentiality}:} communication is either simultaneous (i.e., private) or sequential (i.e., public). In both cases, each sender reports only once.
	\begin{itemize}[noitemsep,topsep=0pt]
		\item[iia)] When communication is simultaneous, all senders report without observing the other senders' reports, and no further specification is required;
		\item[iib)] When sequential, senders report publicly one after the other. In this case, the protocol explicitly specifies the sequence in which senders communicate.
	\end{itemize}
\end{itemize}
Most results in Section~\ref{sec:protocols} hold for any $N \geq 2$ and any communication sequence, so these characteristics of minimal protocols will not be emphasized.

\emph{Efficiency.---} In an efficient equilibrium, all players obtain their respective complete-information payoffs: the receiver selects action $a^+$ when $\theta\geq 0$ and action $a^-$ otherwise; senders always report truthfully, i.e., $\rho_j(\cdot,\theta)=\theta$ for every $j\in S$ and $\theta\in\Theta$. A communication protocol is efficient if it allows for an efficient equilibrium. Since truthful equilibria are fully revealing while non-truthful ones are wasteful, an equilibrium is efficient if and only if it is truthful. Full revelation is neither necessary nor sufficient for efficiency.\footnote{Full revelation may occur alongside misreporting (see, e.g., \cite{ottaviani2006} and \cite{kartik2007}). Therefore, full revelation is not sufficient for efficiency. To obtain her complete-information payoff in this model, the receiver does not need to learn every state. She just need to know if the state is positive or not. Therefore, full revelation is not even necessary for efficiency. However, every state is fully revealed in the efficient equilibrium characterized by Proposition~\ref{prop:eqm}.}

The term ``efficiency'' used here may differ from that employed in other works, as we typically refer to Pareto or utilitarian efficiency, which do not necessarily coincide with complete-information outcomes. Nevertheless, the main result outlined in Proposition~\ref{prop:eqm} constitutes a Pareto efficient outcome. Utilitarian efficiency can be achieved by ``public advocacy'' protocols provided that the receiver's gains from accurate decision-making sufficiently outweigh the senders' gains from persuasion. This assumption is reasonable for most organizations and institutions.

\begin{table}[htbp]
	\centering
	\begin{tabular}{lp{6cm}p{6cm}}
		\toprule
		\multicolumn{1}{c}{}&
		\multicolumn{2}{c}{{\bf Biases}}\\
		\cmidrule(lr){2-3}    
		{\bf Confident.} & \centering{Like-biased} & \hspace{1.5cm} {Opposed-biased}\\
		\midrule
		Private     &  Proposition~\ref{prop:sim} -- There are no coalition-proof and efficient PBE & Proposition~\ref{prop:sim} -- There are no coalition-proof and efficient PBE  \\ 
		Public 		&  Proposition~\ref{prop:like} -- There are no coalition-proof and efficient PBE & Proposition~\ref{prop:eqm}  and Corollary~\ref{cor:strong} -- There exist efficient and strong PBE; characterization provided\\
		\bottomrule
	\end{tabular}
	\caption{Minimal multi-sender protocols and related results.}
	\label{tab:proto}
\end{table}

\subsection{Discussion of the model}\label{sec:model_disc}

Before proceeding with the analysis, it is worth making a few remarks about the model's structure. First, for simplicity, it is assumed that the state space coincides with the real line. However, for the results to carry through, it is sufficient that the state space is large enough to allow the delivery of reports that unambiguously signal the state's sign. Specifically, it is sufficient that $\Theta\supseteq \left[\min\{\ubar r_j(0)\}_{j\in S}, \max\{\bar r_j(0)\}_{j\in S}\right]=\bar\Theta$. Intuitively, there is no equilibrium where senders misreport by delivering reports that are strictly beyond their own reach at zero. Every such a report is always equilibrium dominated.\footnote{By definition of reach, reports $r_j>\bar r_j(0)$ (resp.~$r_j<\ubar r_j(0)$) are strictly dominated when the realized state is negative (resp.~positive). They are unambiguous signals of the state's sign. In some equilibria, reports that are equal or close to a sender's reach at zero are delivered on-path. The state space size should allow for the delivery of such reports. For a detailed discussion, see the analysis in \cite{vaccari2023competition}.}  It is qualitatively irrelevant for the paper's results if the state space's size is larger than $\bar\Theta$.

Second, the model assumes that payoff-differences $\Delta u_i(\theta)$ are increasing in $\theta$. The results do not hinge on this assumption. All findings hold true as long as the payoffs are such that: (i) the sets of states in which a player prefers a particular alternative are convex; (ii) each report remains cheaper when delivered from states that are closer to the report. More formally, for every player $j$, the sets $\Theta_j^{+}=\{\theta\in\Theta \;|\; \Delta u_j(\theta)\geq 0\}$ and $\Theta_j^{-}=\{\theta\in\Theta \;|\; \Delta u_j(\theta)\leq 0\}$ must be convex. In addition, for every sender~$j$, report $r_j\in\Theta$, action $a\in\{a^+,a^-\}$, and pair of states $\theta',\theta''$ such that $|r_j-\theta'|<|r_j-\theta''|$, the payoffs must satisfy $w_j(r_j,a,\theta')>w_j(r_j,a,\theta'')$. Together, these conditions maintain the results while allowing $\Delta u_j(\theta)$ to be sometimes decreasing in its argument.\footnote{The analysis in Section~\ref{sec:protocols} requires that, for every sender, the reaches are finite, unique, and monotonic in the state. The misreporting cost functions $C_j$ do not need to be everywhere continuous as long as the reaches are well-defined for every sender.}

Two benchmarks are relevant. First, if we set $C_j(r_j,\theta)=0$ for every $r_j,\theta\in\Theta$ and $j\in S$, the model becomes one of cheap talk. In this case, every coalition-proof equilibrium results in babbling in states where some senders have a conflict of interest with the receiver.\footnote{In the cheap talk version, credible information transmission remains possible in equilibrium, e.g., in extreme states where all players agree on the best course of action.} Senders can credibly transmit information only when recommending the action they are biased against. As a result, there are no efficient and robust protocols. Second, misreporting is prohibitively expensive when $C_j(r_j,\theta)>|\Delta u_j(\theta)|$ for every $r_j\neq\theta$ and $j\in S$. In this case, every communication protocol results in an efficient and coalition-proof outcome. This paper is concerned with the intermediate case of partially verifiable information where misreporting is possible at a non-prohibitive cost.

Furthermore, it is assumed that the senders' report space coincides with the real line. Results hold under more general and abstract specifications of the senders' report space and cost function, including the possibility of vague language and information withholding. This extension and several other variations of the main model are discussed in Section~\ref{sec:disc} and Appendix~\ref{app:app}.

\section{Communication protocols}\label{sec:protocols}

The end goal of the first part of this section is to rule out communication protocols that are either inefficient or not robust to senders' collusion. A protocol with no senders is not efficient because it involves decision-making under risk. The following proposition shows that even single-sender protocols are not efficient.
\begin{proposition}[Single-sender protocols]\label{prop:one}
	There are no efficient PBE if $N=1$.
\end{proposition}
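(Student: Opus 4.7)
My plan is to argue by contradiction. Suppose an efficient PBE existed with $N=1$. By the definition of efficiency the unique sender would use the truthful strategy $\rho(\theta)=\theta$ and the receiver would select $a^+$ iff $\theta\geq 0$. Because $\rho$ is a bijection, every report $r\in\Theta$ lies on the equilibrium path, so Bayes' rule pins down the posterior after report $r$ as the point mass at $r$. The receiver's best response is then deterministic: $a^+$ for $r\geq 0$ and $a^-$ for $r<0$. The task reduces to exhibiting a state in which the sender strictly prefers to misreport against this threshold rule.

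The bias assumption gives me such a state. For the unique sender $j$, either (a) $\Delta u_j(\theta_0)>0$ at some $\theta_0<0$, or (b) $\Delta u_j(\theta_0)<0$ at some $\theta_0>0$. I would focus on case (a); case (b) is handled symmetrically by deviating downward rather than upward, keeping in mind the tie-breaking rule that forces the misreport to be strictly negative to trigger $a^-$. Weak monotonicity of $\Delta u_j$ yields $\Delta u_j(\theta)\geq \Delta u_j(\theta_0)>0$ for every $\theta\in[\theta_0,0)$. In any such state the truthful recommendation triggers $a^-$, yet the sender would strictly gain from $a^+$, so she has a motive to misreport upward.

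The last step is to convert this motive into an actual profitable deviation. In state $\theta\in[\theta_0,0)$ the cheapest report inducing $a^+$ is $r=0$, because $C_j$ is strictly increasing in $|r-\theta|$ and any $r>0$ strictly increases that distance. Hence the net gain from deviating to $r=0$ equals $\Delta u_j(\theta)-C_j(0,\theta)$: the first term is bounded below by $\Delta u_j(\theta_0)>0$, while the second vanishes as $\theta\to 0^-$ since $C_j(0,0)=0$. Choosing $\theta$ close enough to $0$ therefore produces a strict improvement over truthful reporting, contradicting the equilibrium hypothesis. Equivalently, one can argue through the reach: from $\Delta u_j(0)\geq\Delta u_j(\theta_0)>0$ and $C_j(0,0)=0$ one obtains $\bar r_j(0)>0$, hence $\bar r_j(\theta)>0$ for $\theta$ sufficiently close to $0$ from below, and any $r\in[0,\bar r_j(\theta))$ strictly dominates truth-telling by the definition of the reach. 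The only delicate point—and the main obstacle I see—is the continuity of $C_j$ at $(0,0)$ needed to make the cost of the upward misreport small for $\theta$ near $0$; this is an implicit regularity condition on which the reach is also built, and is standard in the costly-talk literature.
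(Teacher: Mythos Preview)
Your proof is correct and follows essentially the same route as the paper: argue by contradiction, note that truthful reporting makes every report on-path so the receiver's threshold rule is pinned down, and then find a state just on the ``wrong'' side of zero from which the biased sender can cross the threshold at a cost strictly below the gain $|\Delta u_1(\theta)|$. The paper phrases the deviation directly through the reach (pick $\theta'<0$ with $\bar r_1(\theta')>0$ and report any $r_1\in[0,\bar r_1(\theta'))$), which is exactly the alternative you sketch at the end; your explicit flag that this step rests on continuity of $C_j$ near $(0,0)$ is a fair observation that the paper leaves implicit.
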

\begin{proof}
	Consider a protocol where $N=1$ and $\Delta u_1(0)>0$ (the proof is similar for $\Delta u_1(0)<0$). Suppose that there exists an efficient PBE where $\rho_1(\theta)=\theta$ for every $\theta\in\Theta$. The receiver's posterior beliefs are degenerate on $\theta=r_1$. Sender~1 can profitably deviate by delivering $r_1\in[0,\bar r_1(\theta'))$ in some $\theta'<0$ where $\bar r_1(\theta')>0$, contradicting the existence of a PBE in truthful strategies.
\end{proof}
Single-sender protocols are inefficient because there are always states in which the consulted sender misrepresents information. With no other source of advice, the receiver cannot cross-validate reports to spur truthful reporting. Proposition~\ref{prop:one} implies that multi-sender protocols are necessary for efficiency. The following proposition considers arrangements where the receiver privately or simultaneously consults multiple senders with any bias configuration.

\begin{proposition}[Simultaneous communication]\label{prop:sim}
	Consider protocols with $N\geq 2$ senders that communicate simultaneously. Efficient PBE of these protocols are not coalition-proof.
\end{proposition}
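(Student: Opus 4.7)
The plan is to fix an arbitrary efficient PBE of a simultaneous protocol with $N\geq 2$ senders and construct a strictly profitable, self-enforcing coalitional deviation, showing the PBE is not coalition-proof. Since efficiency pins down truthful on-path reporting and the complete-information action, any profitable joint deviation has to be a coordinated lie that the receiver cannot unmask.

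The first step is to isolate two like-biased senders $i,j$, say with $\Delta u_i(0),\Delta u_j(0)>0$. For $N\geq 3$ such a pair exists by pigeonhole on the sign of $\Delta u_k(0)$. For $N=2$ the restriction involves no loss of generality: two opposed-biased senders cannot sustain an efficient PBE, because for small $\epsilon>0$ the off-path profile $(r_1,r_2)=(\epsilon,-\epsilon)$ arises either as a small upward lie by sender~$1$ from state $-\epsilon$ or as a small downward lie by sender~$2$ from state $\epsilon$, so whichever off-path action the receiver assigns to this profile the corresponding sender enjoys a strictly profitable unilateral deviation (gain $\approx|\Delta u_k(0)|$ versus cost $\approx 0$). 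Given such a pair $i,j$, the bias assumption and the monotonicity of $\Delta u_k$ produce a state $\theta'<0$ close enough to $0$ that $\Delta u_i(\theta'),\Delta u_j(\theta')>0$ and both reaches $\bar r_i(\theta'),\bar r_j(\theta')$ are strictly positive; fix a target $\hat\theta\in(0,\min\{\bar r_i(\theta'),\bar r_j(\theta')\}]$.

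The coalitional deviation is: in state $\theta'$, both $i$ and $j$ report $\hat\theta$ while all other senders follow their equilibrium strategies. When $N=2$ the receiver sees a consistent report $\hat\theta>0$ and selects $a^+$; by construction each member's gain $\Delta u_k(\theta')$ strictly exceeds the cost $C_k(\hat\theta,\theta')$, so both strictly improve over their equilibrium payoff. For $N\geq 3$ the same conclusion should follow once one observes that any off-path beliefs sustaining the original efficient PBE against opposed-biased senders' unilateral lies must give overriding weight to the subset of mutually consistent reports, so that seeing $i$ and $j$ agree on $\hat\theta$ still induces $a^+$.

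Self-enforcement follows by inspecting the singleton sub-coalitions. If $i$ reverts to truth while $j$ still reports $\hat\theta$, the resulting inconsistent profile is exactly the one that the efficient PBE's off-path beliefs already had to deter with action $a^-$---otherwise $i$'s unilateral upward lie from $\theta'$ would have been profitable in the first place---and $i$ then strictly prefers the joint lie by the reach condition; symmetrically for $j$. The main obstacle I anticipate is the $N\geq 3$ case: rigorously showing that any efficient-PBE off-path belief structure, however it is specified on inconsistent profiles, must render a consistent two-sender lie effective. This probably requires a short case analysis of how the receiver's posterior must respond to profiles in which exactly one sender dissents, together with the reach-based trade-off between benefit and cost for small lies near $\theta=0$.
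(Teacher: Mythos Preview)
Your treatment of the $N=2$ case is essentially correct and matches the paper: with two opposed-biased senders no efficient PBE exists (your $(\epsilon,-\epsilon)$ argument is exactly the paper's), and with two like-biased senders the full two-sender coalition to a common report near zero is the right deviation. One small slip: in the self-enforcement step you write ``otherwise $i$'s unilateral upward lie from $\theta'$ would have been profitable,'' but the profile $(\theta',\hat\theta)$ is $j$'s unilateral lie, not $i$'s. More substantively, you only check that a sub-deviator cannot gain by \emph{reverting to truth}; you do not rule out that $i$ deviates to some cheaper $r_i\in[0,\hat\theta)$ with $\beta(r_i,\hat\theta)=a^+$. For an \emph{arbitrary} efficient PBE such off-path beliefs are not excluded when $r_i\geq 0$, so your fixed $\hat\theta>0$ need not yield a self-enforcing deviation. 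The paper handles this by a minimality construction: it selects, given the PBE's off-path beliefs, a cost-minimal profile that still induces $a^+$, so that by construction no sub-coalition can lower costs further without losing $a^+$.

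The genuine gap is the $N\geq 3$ case, and the obstacle you anticipate is fatal for the two-sender approach. Your claim that ``off-path beliefs sustaining the original efficient PBE \ldots must give overriding weight to the subset of mutually consistent reports'' is false in general. Take $N=3$ with all three senders positively biased. The profile $(\hat\theta,\hat\theta,\theta')$ is reachable from state $\hat\theta$ only by sender~3 deviating downward, and since sender~3 also prefers $a^+$ at $\hat\theta>0$, that deviation is unprofitable regardless of $\beta$; hence the PBE places no constraint on $\beta(\hat\theta,\hat\theta,\theta')$, which can freely be $a^-$. The same failure occurs whenever there is more than one sender outside your two-sender coalition (e.g.\ $N=4$ with two positive and two negative): the profile where exactly $i,j$ dissent is not reachable by any \emph{unilateral} deviation from any state, so off-path beliefs are entirely free there. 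The paper's route is different: it takes the coalition to be \emph{all} senders on one side (call it $Z$), observes that the profile in which every $j\in Z$ reports $\epsilon$ and every opposed sender reports $-\epsilon$ must induce either $a^+$ or $a^-$, and in either case the corresponding full side-coalition has a profitable joint deviation; self-enforcement is then secured by passing to the cost-minimal such profile. Your ``short case analysis of profiles in which exactly one sender dissents'' cannot substitute for this, because the relevant off-path profiles involve several dissenters.
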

\begin{proof}
	Consider first the case where all senders are like-biased, i.e., $\Delta u_j(0)>0$ for all $j\in S$ (the proof is similar if $\Delta u_j(0)<0$ for all $j\in S$). There exists an efficient PBE where $\rho_j(\theta)=\theta$ for every $j\in S$ and $\theta\in\Theta$, and beliefs $p$ are such that $\beta(\{r_j\}_{j\in S})=a^-$ if $r_i\neq r_k$ for some $i, k \in S$. Given strategies and beliefs, no sender has an incentive to deviate from truthful reporting. Posterior beliefs are pinned down by Bayes' rule only for the case $r_i=r_j$ for all $i,j\in S$. When all reports are identical, the receiver assigns probability $1$ to $\theta=r_j$. Off-path beliefs ensure that individual deviations are not profitable. 
	
	However, efficient equilibria of this configuration are not coalition-proof. Consider a state $\vartheta_\epsilon<0$ such that $\min\{\bar r_j(\vartheta_\epsilon)\}_{j\in S}>0$. There is a coalition formed by all senders in $S$ such that, when the state is $\vartheta_\epsilon$, each sender~$j\in S$ deviates from truthful reporting by delivering $r_j=r'\in[0,\min\{\bar r_j(\vartheta_\epsilon)\}_{j\in S})$. Upon observing $\{r_j=r'\}_{j\in S}$, the receiver selects action $a^+$. Given $p$, if some sender delivers a report different than $r'$, then the receiver selects $a^-$. Therefore, this coalitional deviation is mutually beneficial and self-enforcing: there is no proper sub-coalition that, taking fixed the action of its complement, can agree to deviate from the original deviation in a way that makes all of its members better off. As a result, every efficient PBE of this like-biased configuration is not coalition-proof.
	
	Consider now the case where at least two senders are opposed-biased. That is, there are at least two senders $i,j\in S$, $i\neq j$, such that $\Delta u_i(0)\cdot \Delta u_j(0)<0$. First, I show that misreporting occurs in every PBE when the number of senders is $N=2$. Suppose there exists a PBE where misreporting never occurs, that is, where $\rho_1(\theta)=\rho_2(\theta)=\theta$ for every $\theta\in\Theta$. Consider such a truthful equilibrium, two opposed-biased senders with $\Delta u_2(0)<0< \Delta u_1(0)$, and a state $\theta=\vartheta_\epsilon>0$, where $\vartheta_\epsilon$ is small enough. We have that $\rho_1(\vartheta_\epsilon)=\rho_2(\vartheta_\epsilon)=\vartheta_\epsilon$. To discourage deviations, off path beliefs must be such that $\beta(\vartheta_\epsilon,-\vartheta_\epsilon)=a^+$. However, there always exists an $\vartheta_\epsilon>0$ such that, when the state is $\theta=-\vartheta_\epsilon$, sender~1 can profitably deviate from the prescribed truthful strategy (i.e, $\rho_1(-\vartheta_\epsilon)=\rho_2(-\vartheta_\epsilon)=-\vartheta_\epsilon$) by reporting $r_1=\vartheta_\epsilon$, as $\Delta u_1(-\vartheta_\epsilon)> C_1(\vartheta_\epsilon,-\vartheta_\epsilon)$. This contradicts that there exists an equilibrium where misreporting never occurs for the case $N=2$.
	
	For the case $N>2$, efficient PBE can exist but are not coalition-proof. Consider protocols with more than two senders that communicate simultaneously. Define the sets of senders $Z=\{j\mid\Delta u_j(0)>0\}$ and $Y=\{i\mid\Delta u_i(0)<0\}$, and the profile of reports $\tilde r_Z=\{r_j\}_{j\in Z}$ and $\tilde r_Y=\{r_i\}_{i\in Y}$. Say that $\tilde r_L = x$ when $r_j=x$ for all $j\in L$. The receiver's decision rule is $\beta(\tilde r_Z, \tilde r_Y)$. In a truthful equilibrium, $\tilde r_Z =\tilde r_Y=\theta$ for all $\theta\in\Theta$. Given a state $\theta=-\vartheta_\epsilon<0$, beliefs are such that $\beta(-\vartheta_\epsilon,-\vartheta_\epsilon)=a^-$ and $\beta(\vartheta_\epsilon,\vartheta_\epsilon)=a^+$. Suppose that off path beliefs are such that $\beta(\vartheta_\epsilon,-\vartheta_\epsilon)=a^+$, and take $\vartheta_\epsilon$ small enough\footnote{If instead beliefs are such that $\beta(\vartheta_\epsilon,-\vartheta_\epsilon)=a^-$, then the proof is similar by considering a state $\theta=\vartheta_\epsilon>0$ small enough and a deviation from the coalition of senders $i\in Y$.} so that $\bar r_j(-\vartheta_\epsilon)\geq \vartheta_\epsilon$ for every $j\in Z$. Denote by $\tilde r_Z^*=\{r_j^*\}_{j\in Z}$ a profile of reports with the following features: $\beta(\tilde r_Z^*,-\vartheta_\epsilon)=a^+$, and $\beta(\{r_j\}_{j\in Z},-\vartheta_\epsilon)=a^-$ for any $\{r_j\}_{j\in Z}$ such that $C_j(r_j,-\vartheta_\epsilon)\leq C_j(r_j^*,-\vartheta_\epsilon)$ for every $j\in Z$, with a strict inequality for some $j$. The report $\tilde r_Z^*$ exists because $\beta(\vartheta_\epsilon,-\vartheta_\epsilon)=a^+$, and therefore $r_j^*\leq \vartheta_\epsilon$ for every $j\in Z$. There is a coalition formed by all senders in $Z$ such that, when the state is $\theta=-\vartheta_\epsilon$, each $j\in Z$ can deviate to $r_j^*\in \tilde r_Z^*$. This deviation is mutually beneficial and self-enforcing: by construction of $\tilde r_Z^*$, there is no proper sub-coalition that, taking fixed the action of its complement, can agree to deviate from the deviation in a way that makes all of its members better off. Equilibria in truthful strategies of these protocols are not coalition-proof \citep{bernheim1987coalition}.
\end{proof}

There are two reasons why simultaneous communication protocols may initially seem a promising avenue toward efficiency. First, fully revealing equilibria in truthful strategies exist in simultaneous cheap talk games with three or more senders with any bias type.\footnote{See, for example, the introductory section in \cite{battaglini2004}. In these equilibria, individual deviations from truthful reporting are immediately detected. The same intuition carries over to the costly talk signaling structure considered here.}  Second, the receiver can achieve efficiency even by simultaneously consulting two (or more) like-biased senders. In this last case, the receiver can induce truthful reporting by applying skepticism when reports do not coincide. Simultaneous communication protocols can exploit senders' lack of coordination to make any individual persuasion attempt futile. As we have seen, collusion effectively restores senders' ability to coordinate persuasion.

Proposition~\ref{prop:sim} shows that simultaneous protocols are ineffective when senders can engage in non-binding pre-play communication. The possibility of discussing strategies before consultation allows senders to coordinate persuasion in a self-enforcing way, even though they have no commitment power.  This last result implies that sequential communication protocols are necessary to jointly achieve efficiency and robustness to collusion. The following proposition considers arrangements where the receiver sequentially consults multiple like-biased senders.

\begin{proposition}[Sequential consultation of like-biased senders]\label{prop:like}
	There are no efficient and coalition-proof PBE if there are $N\geq 2$ like-biased senders that communicate publicly and sequentially.
\end{proposition}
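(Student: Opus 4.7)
I would proceed by contradiction and assume an efficient PBE exists. Without loss of generality I take $\Delta u_j(0) > 0$ for every $j \in S$ (the opposite like-biased case is symmetric, with $a^+$ and $a^-$ interchanged). Using weak monotonicity of $\Delta u_j$ together with the assumption that each sender disagrees with the receiver in some state (so $\Delta u_j(\theta) > 0$ on some $(\theta_j^*, 0)$ with $\theta_j^* < 0$), I would pick a state $\theta' < 0$ close enough to zero that $\Delta u_j(\theta') > 0$ and $\bar r_j(\theta') > 0$ for every $j$, and a common report $r^* \in (0, \min_j \bar r_j(\theta'))$.

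The driving observation is Bayesian: in the candidate efficient PBE every sender reports truthfully at every information set, so a unanimous terminal history $(r, r, \ldots, r)$ arises only in state $\theta = r$. Bayes' rule then pins the receiver's posterior there to be degenerate at $r$, and in particular the unanimous $r^*$-history induces action $a^+$. Non-unanimous terminal histories are off-path, with receiver actions a priori free, but sequential rationality will constrain them. For $k \in \{2, \ldots, N\}$, let $\alpha_k$ denote the receiver's action at the terminal history whose first $k-1$ reports equal $r^*$ and whose remaining $N-k+1$ reports equal $\theta'$; set $\alpha_{N+1} := a^+$ to encode the unanimous history.

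I would then run a backward induction on $k$. At sender $k$'s information set $(\underbrace{r^*, \ldots, r^*}_{k-1}, \theta')$, truthful reporting yields payoff $u_k(\alpha_k, \theta')$, while a unilateral match to $r_k = r^*$ shifts the terminal history by one additional $r^*$ (later senders stay truthful by their equilibrium strategies) and yields $u_k(\alpha_{k+1}, \theta') - C_k(r^*, \theta')$. Starting from $\alpha_{N+1} = a^+$ and iterating from $k = N$ down to $k = 2$, sequential rationality combined with $r^* < \bar r_k(\theta')$ (which gives $C_k(r^*, \theta') < \Delta u_k(\theta')$) forces $\alpha_k = a^+$ at every step, for otherwise sender $k$ would strictly prefer to match.

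Applying this conclusion at $k = 2$, the off-path action at $(r^*, \theta', \ldots, \theta')$ is $a^+$. Sender $1$'s deviation to $r_1 = r^*$ against truthful continuation therefore yields payoff $u_1(a^+, \theta') - C_1(r^*, \theta') > u_1(a^-, \theta')$ because $r^* < \bar r_1(\theta')$, contradicting sender $1$'s truthful reporting in the efficient PBE. The main obstacle will be the bookkeeping of the backward induction: at each level I need to verify that a single matching deviation shifts the terminal history by exactly one additional $r^*$ while later senders remain anchored to their truthful equilibrium strategies, so that the cascade of inductively determined off-path actions, pinned at the top by Bayes' rule on the unanimous history, propagates cleanly down to sender $1$.
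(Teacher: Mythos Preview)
Your argument is correct and follows essentially the same backward-induction route as the paper's proof: pick $\theta'<0$ close to zero and a common target report, use Bayes' rule to pin the receiver at $a^+$ on the unanimous history, and then work down from sender $N$ to sender $1$ to produce a profitable deviation for sender~$1$. Your version is in fact more carefully stated than the paper's: where the paper writes that after sender~$1$ deviates ``all subsequent senders would follow suit,'' you correctly hold later senders to their truthful equilibrium strategies and instead use sequential rationality at each off-path node to force the receiver's actions $\alpha_k$ to equal $a^+$, which is exactly the bookkeeping needed to make the induction rigorous.
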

\begin{proof}
	Suppose there is a PBE in truthful strategies where $\Delta u_j(0)>0$ for all $j\in S$ (the proof is analogous if $\Delta u_j(0)<0$ for all $j\in S$). In this equilibrium, $\rho_1(\theta)=\rho_n(\{r_j=\theta\}_{j=1}^{n-1},\theta)=\theta$ for every $\theta\in\Theta$ and $n\in\{2,\ldots,N\}$. On-path, the receiver's beliefs are degenerate on $\theta=r_j$, $j\in S$. Consider a state $\theta'<0$ and a report $r'\geq 0$ such that $\Delta u_j(\theta')>0$ and $\bar r_j(\theta')>r'\geq 0$ for all $j\in S$. An argument similar to that for simultaneous like-bias protocols (Proposition~\ref{prop:sim}) shows that truthful PBE of this configuration are not coalition-proof. Through pre-play communication, senders can agree to deliver $r'$ in state $\theta'$ provided that all senders before delivered $r'$ as well. Upon observing $\{r_j=r'\}_{j\in S}$, the receiver selects action $a^+$. Given $p$, if some sender delivers a report different than $r'$, then the receiver selects $a^-$. After observing the tuple of reports $\{r'\}_{j=1}^{N-1}$, sender~$N$ can profitably deliver $r_N=r'$, as $\beta(\{r'\}_{j=1}^N)=a^+$. By induction, the same is true for every sender $j\in S$. Therefore, this coalitional deviation is mutually beneficial and self-enforcing: there is no proper sub-coalition that, taking fixed the action of its complement, can agree to deviate from the original deviation in a way that makes all of its members better off. Truthful equilibria of this configuration are not coalition-proof.
\end{proof}

Sequential protocols with like-biased senders remain problematic because collusion can still be enforced through non-binding pre-play communication. Before consultation, senders can jointly agree to deviate from truth-telling to achieve persuasion. It follows that truthful equilibria and efficient outcomes cannot be supported by sequential communication protocols where all senders are like-biased. Next, we turn to analyze sequential protocols with opposed-biased senders.  

\subsection{Public advocacy}\label{sec:public_advocacy}

Altogether, Propositions~\ref{prop:one}, \ref{prop:sim}, and \ref{prop:like} rule out a large class of communication protocols. This first batch of results implies that efficient and robust outcomes may be achieved only through a specific arrangement: multi-sender protocols where communication is sequential and senders are opposed-biased. I hereafter refer to these arrangements as \emph{public advocacy}, as they involve the sequential and public consultation of senders with conflicting interests. The following proposition shows that public advocacy yields efficient equilibria.

\begin{proposition}[Public advocacy]\label{prop:eqm}
	Consider a protocol with $N=2$ senders that communicate sequentially and are opposed-biased, i.e., $\Delta u_2(0)<0<\Delta u_1(0)$. There exists an efficient perfect Bayesian equilibrium where
	\begin{equation*}
		\rho_1(\theta)=\theta \quad \forall \; \theta \in \Theta,
	\end{equation*}
	\begin{equation*}
		\rho_2(r_1,\theta) =
		\begin{cases}
			\min\{ \ubar r_2(0),\theta \} & \quad \text{if } \; \theta<0\leq r_1,\\
			\theta & \quad \text{otherwise. }
		\end{cases}
	\end{equation*}
	\begin{equation*}
	p \text{ are s.t. } \beta(r_1,r_2)=
	\begin{cases}
		a^+ & \quad \text{if } \; r_1\geq 0 \; \text{ and } \; r_2>\ubar r_2(0), \\
		a^- & \quad \text{otherwise. }
	\end{cases}
\end{equation*}
\end{proposition}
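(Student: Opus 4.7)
The plan is to verify that the candidate strategies and beliefs constitute a PBE; because on-path both senders report the true state, efficiency then follows immediately. I would proceed in the order (i) on-path behavior, (ii) sender~2's best response to every $r_1$, (iii) sender~1's best response anticipating sender~2's reply, and (iv) the receiver's sequential rationality under off-path beliefs. Step (i) is a direct check: if $\theta\geq 0$, both senders report $\theta$, so $r_1\geq 0$ and $r_2=\theta>\ubar r_2(0)$ trigger $a^+$; if $\theta<0$, both again report $\theta$, so $r_1<0$ triggers $a^-$.

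For step (ii) I would partition by the sign of $r_1$. If $r_1<0$, the receiver selects $a^-$ regardless of $r_2$, so truthful reporting minimizes cost. If $r_1\geq 0$, the receiver picks $a^+$ when $r_2>\ubar r_2(0)$ and $a^-$ otherwise, so sender~2 compares the cheapest report in each region---roughly, truth-telling (cost zero) on one side and $\ubar r_2(0)$ (cost $C_2(\ubar r_2(0),\theta)$) on the other. The central step---and the main obstacle of the proof---is showing that sender~2 prefers $a^-$ precisely when $\theta<0$. By definition $C_2(\ubar r_2(0),0)=|\Delta u_2(0)|$; monotonicity of $C_2$ in $|r_2-\theta|$ then gives $C_2(\ubar r_2(0),\theta)\geq|\Delta u_2(0)|$ when $\theta\geq 0$ and $C_2(\ubar r_2(0),\theta)\leq|\Delta u_2(0)|$ when $\theta\in(\ubar r_2(0),0)$, while monotonicity of $\Delta u_2$ simultaneously gives $|\Delta u_2(\theta)|\leq|\Delta u_2(0)|$ when $\theta\geq 0$ and $|\Delta u_2(\theta)|\geq|\Delta u_2(0)|$ when $\theta<0$. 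Chaining the two inequalities in each regime delivers cost$\,\geq\,$gain for $\theta\geq 0$ and gain$\,\geq\,$cost for $\theta<0$, which is the ``beyond a reasonable doubt'' logic that $\ubar r_2(0)$ was designed to encode.

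For step (iii), given sender~2's reply function, the induced action is $a^+$ if and only if $r_1\geq 0$ and $\theta\geq 0$, so sender~1 can only influence the action through the sign of $r_1$. When $\theta\geq 0$, truthful reporting yields the preferred $a^+$ at zero cost, while any deviation either retains $a^+$ at strictly positive cost or flips to the worse $a^-$; neither is profitable because $\Delta u_1(\theta)\geq\Delta u_1(0)>0$. When $\theta<0$, a deviation to $r_1\geq 0$ is futile, because sender~2 then replies with some $r_2\leq\ubar r_2(0)$ and the action remains $a^-$, so only cost is added. For step (iv) I would specify posteriors supported on $\{\theta\geq 0\}$ whenever $r_1\geq 0$ and $r_2>\ubar r_2(0)$, and on $\{\theta<0\}$ otherwise; on-path, Bayes' rule yields degenerate beliefs at $\theta=r_1=r_2$, which agree with $\beta$.
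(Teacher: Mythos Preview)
Your proposal is correct and follows essentially the same approach as the paper: verify sender~2's best response by splitting on the sign of $r_1$ and comparing the cost of reaching $\ubar r_2(0)$ to the gain $|\Delta u_2(\theta)|$, then check sender~1's best response given sender~2's reply, and finally note that Bayes' rule pins down beliefs only on the diagonal $r_1=r_2$ while off-path beliefs are free. The only difference is expository: where the paper simply invokes ``by definition of reach,'' you spell out the monotonicity chain explicitly (and should in passing note that when $\Delta u_2(\theta)\geq 0$ sender~2 has no gain from inducing $a^-$, so the inequality $|\Delta u_2(\theta)|\leq|\Delta u_2(0)|$ is not needed there).
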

\begin{proof}
	Given beliefs $p$ and sender~2's strategy, truthful reporting is strictly dominant for sender~1 in every state. Given beliefs $p$ and sender~1's strategy: (i) if $r_1<0$, then truthful reporting is always strictly dominant for sender~2, as $\beta(r_1,\cdot)=a^-$ for every $r_2$; (ii) if $r_1 \geq 0$ and $\theta\geq 0$, then truthful reporting is strictly dominant for sender~2 because action $a^-$ can be induced only by delivering a $r_2\leq \ubar r_2(0)$ which, by definition of reach, is never profitable in non-negative states; (iii) if $r_1 \geq 0$ and $\theta<0$, then sender~2 can induce action $a^-$ only by delivering some $r_2\leq \ubar r_2(0)$. By definition of reach, $r_2=\min\{\ubar r_2(0),\theta\}$ is strictly dominant in this case. Given senders' strategies, beliefs $p$ are pinned down by Bayes' rule only for $r_1=r_2$, and are free otherwise. When reports are identical, the receiver assigns probability $1$ to $\theta=r_1=r_2$. Off-path beliefs are free and set as in Proposition~\ref{prop:eqm}. Since senders play truthful strategies, have no profitable individual deviations, and beliefs are according to Bayes' rule whenever possible, this is an efficient equilibrium.
\end{proof}

Proposition~\ref{prop:eqm} provides an equilibrium characterization, which allows us to understand the mechanism supporting truthful reporting on the equilibrium path. The key to efficiency in public advocacy stands in how the receiver allocates the burden of proof between the two senders.\footnote{Since the receiver fully learns the state after sequentially consulting two opposed-biased senders, efficiency can be achieved by public advocacy protocols with $N\geq 2$ senders. The focus on $N=2$, based on a minimality principle, is therefore without loss of generality.} Beliefs must be consistent with Bayes' rule when senders deliver identical reports in a truthful equilibrium. In these cases, the receiver always follows the senders' recommendations. By contrast, beliefs are free in all those cases where senders disagree. The construction of suitable off path beliefs is crucial in sustaining truthful equilibria.

To illustrate the role of beliefs, consider the case where sender~1 prefers $a^+$ and sender~2 prefers $a^-$. Recall that sender~1 speaks first.\footnote{The order in which senders communicate is irrelevant for the result in Proposition~\ref{prop:eqm}.} When sender~1 recommends its least favorite action $a^-$ (i.e., $r_1<0$), the receiver selects $a^-$ no matter what sender~2 reports. By contrast, when sender~1 suggests its favorite action $a^+$ (i.e., $r_1\geq 0$), the receiver's decision depends on what sender~2 reports. In this case, sender~2 can convince the receiver to select $a^-$ only by delivering a report dominated in non-negative states (i.e., $r_2\leq \ubar r_2(0)$). Intuitively, the receiver follows sender~1's advice to choose $a^+$ only if sender~2 fails to provide undeniable evidence that the state is negative. The burden of proof allocation ensures that senders have no incentive to deviate from truthful reporting.\footnote{As set in Proposition~\ref{prop:eqm}, the burden of proof is reminiscent of the ``onus'' applied in trials before tribunals. In legal disputes, one party is initially presumed to be correct. In contrast, the other party is burdened by providing sufficiently persuasive evidence to prove its case ``beyond a reasonable doubt.''}

The sequential structure of public advocacy is key for efficiency. To illustrate the significance of sequentiality, it is useful to compare the equilibrium in Proposition~\ref{prop:eqm} with simultaneous protocols involving two opposed-biased senders as studied in Proposition~\ref{prop:sim}. We have observed that a truthful equilibrium where $\rho_1(\theta)=\rho_2(\theta)=\theta$ for every $\theta\in\Theta$ cannot be sustained by a simultaneous protocol because there are always states where one of the two senders can profitably deviate from the prescribed strategy. By contrast, the sequential protocol in Proposition~\ref{prop:eqm} sustains truthful reporting on the equilibrium path, that is, $\rho_1(\theta)=\rho_2(\theta,\theta)=\theta$ for every $\theta\in\Theta$. Crucially, this is possible because whoever speaks second can condition their report on the one delivered by the first speaker. Given the receiver's beliefs, the second sender has always the option to profitably counteract any persuasion attempt by the first sender. At the same time, whoever speaks second cannot profit from persuasion, as it is always prohibitively expensive.\footnote{One might think that we can use the receiver's beliefs as outlined in Proposition~\ref{prop:eqm} to attain truthful equilibria in simultaneous protocols. However, this approach would prove ineffective: in a slightly negative state, say $\theta=-\vartheta_\epsilon$, sender~2 would be compelled to misreport and deliver some $r_2\leq\ubar r_2(0)<-\vartheta_\epsilon$ to correctly induce $\beta=a^-$. This strategy is inherently non-truthful, and thus yields inefficiencies.}

The adversarial structure of public advocacy provides an additional benefit: the senders, having conflicting goals, cannot coordinate to influence the receiver's decision. Resilience to collusion is desirable in organizations where informed agents can discuss their intentions before being consulted by the receiver. The following corollary confirms that the protocol in Proposition~\ref{prop:eqm} is robust to non-binding pre-play communication.

\begin{corollary}\label{cor:strong}
	The PBE in Proposition~\ref{prop:eqm} is strong.
\end{corollary}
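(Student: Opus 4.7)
My plan is to enumerate the coalitions $K \subseteq \{1, 2, r\}$ and show none admits a jointly strictly improving deviation. Singleton coalitions $\{1\}$ and $\{2\}$ will follow from the proof of Proposition~\ref{prop:eqm}, which already establishes truthful reporting as a strict best reply for each sender. I would dismiss any coalition containing the receiver by noting that in the efficient equilibrium the induced action $a^*(\theta)$---equal to $a^+$ for $\theta \geq 0$ and to $a^-$ for $\theta<0$---is the pointwise maximizer of $u_r(\cdot,\theta)$, thanks to the normalization on $\Delta u_r$. Any joint deviation either leaves $a'(\theta)=a^*(\theta)$ on a full-measure set (so the receiver's payoff is unchanged) or alters $a'(\theta)$ on a positive-measure set (so the receiver's expected payoff strictly decreases). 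Either way the receiver cannot strictly gain, so the substantive case reduces to the coalition $\{1,2\}$.

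For this case, I would fix an arbitrary joint deviation $(\rho_1', \rho_2')$, denote by $a'(\theta)$ the receiver's induced action, and define
\[
B_+ := \{\theta < 0 : a'(\theta) = a^+\}, \qquad B_- := \{\theta \geq 0 : a'(\theta) = a^-\}.
\]
Since senders incur no cost in the efficient equilibrium, a direct computation yields, for each $j \in \{1,2\}$,
\[
\Delta V_j = \int_{B_+} \Delta u_j(\theta) f(\theta)\, d\theta - \int_{B_-} \Delta u_j(\theta) f(\theta)\, d\theta - \mathbb{E}\bigl[ C_j(\rho_j'(\theta), \theta) \bigr].
\]
The key step is the following bound: weak monotonicity of $\Delta u_j$ gives $\Delta u_j(\theta) \leq \Delta u_j(0)$ on $B_+ \subseteq \{\theta<0\}$ and $\Delta u_j(\theta) \geq \Delta u_j(0)$ on $B_- \subseteq \{\theta\geq 0\}$, so together with nonnegativity of costs,
\[
\Delta V_j \leq \Delta u_j(0)\bigl( \Pr(B_+) - \Pr(B_-) \bigr).
\]
Hence $\Delta V_j > 0$ forces $\Delta u_j(0)(\Pr(B_+) - \Pr(B_-)) > 0$. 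Because the senders are opposed-biased ($\Delta u_1(0) > 0 > \Delta u_2(0)$), requiring $\Delta V_1>0$ and $\Delta V_2>0$ simultaneously produces the contradictory pair $\Pr(B_+) > \Pr(B_-)$ and $\Pr(B_-) > \Pr(B_+)$.

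The hardest conceptual point is this last step: one must rule out that the senders jointly ``trade'' actions across states (sender~1 conceding $a^-$ on some positive-$\theta$ states in exchange for sender~2 conceding $a^+$ on some negative-$\theta$ states) in a mutually beneficial way. The monotonicity bound closes this possibility precisely because $\Delta u_1(0)$ and $\Delta u_2(0)$ serve as uniform per-state benchmarks pointing in opposite directions, so any reallocation of mass between $B_+$ and $B_-$ benefits one sender and hurts the other by exactly offsetting amounts, leaving no slack to cover the unavoidable communication costs.
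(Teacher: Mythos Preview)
Your proof is correct and follows the same two-step skeleton as the paper: coalitions containing the receiver are ruled out because the equilibrium already delivers the receiver's complete-information optimum, and the coalition $\{1,2\}$ is ruled out because the senders are opposed-biased. The paper's own argument is a two-sentence observation and does not spell out the $\{1,2\}$ case beyond asserting that opposed-biased senders cannot jointly profit.

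Where you differ is in how much machinery you bring to the $\{1,2\}$ case. You compute ex-ante expected payoff changes and bound them via monotonicity of $\Delta u_j$, explicitly closing the ``trading across states'' loophole. The paper almost certainly has in mind the simpler interim (state-by-state) argument: for each realized $\theta\geq 0$, switching the action to $a^-$ costs sender~1 at least $\Delta u_1(0)>0$, and for each $\theta<0$, switching to $a^+$ costs sender~2 at least $-\Delta u_2(0)>0$; hence in every state at least one coalition member weakly loses, and strictly so whenever the action changes. Your ex-ante bound is not wrong---it is in fact the right tool if one adopts the strategy-profile (ex-ante) version of strong equilibrium---but it is heavier than what the paper's one-line claim requires. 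Both routes buy the same conclusion; yours is more explicit about the aggregation, the paper's is more direct given the interim reading of coalitional deviations used elsewhere in the paper (cf.\ the proof of Proposition~\ref{prop:sim}).
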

\begin{proof}
	The proof follows from the observation that there is no profitable coalitional deviation involving two opposed-biased senders. First, any deviation from the prescribed equilibrium entails a cost for at least one sender. Second, a coalitional deviation that makes one sender strictly better off must make the other sender strictly worse off, as they are opposed-biased. Furthermore, the receiver cannot gain from a coalitional deviation because the equilibrium is already efficient. Therefore, the equilibrium in Proposition~\ref{prop:eqm} is strong \citep{aumann1959} and coalition-proof \citep{bernheim1987coalition}.
\end{proof}

\section{Discussion}\label{sec:disc}

Findings in the previous section show that there is a unique communication protocol that is efficient, minimal, and resilient to collusion. This protocol, called public advocacy, requires the receiver to consult sequentially and publicly two senders with conflicting interests. This section discusses the robustness of these findings, and highlights further differences with respect to related work.

The model assumes that senders can neither withhold information nor deliver vague reports. The results do not rely on these limitations. Consider an extension of the model where senders can omit or muddle information at a cost.\footnote{Vagueness and omissions are easily detectable, whereas misreporting is not. Organizations can replace or decrease the budget of managers known to be perfectly informed and yet purposefully refuse to provide accurate information.} Every equilibrium of the main model is also an equilibrium of this extended model where the receiver interprets unexpected omissions or vagueness unfavorably. Inefficient protocols remain inefficient in this extended model with additional costly actions, as taking a costly action is by itself inefficient. Public advocacy remains the only efficient, minimal, and robust protocol.

Relatedly, the model studies a setting where the senders' report space is the real line. Results may not hold under different report spaces and cost functions. For example, Section~\ref{sec:model_disc} discusses the case where $C_j(r_j,\theta)>|\Delta u_j(\theta)|$ for every $r_j\neq\theta$ and $j\in S$. In this case, misreporting would be prohibitively expensive, and truthful reporting would always ensue. Likewise, efficient communication trivially occurs in the extreme case where the senders' report space is restricted to the costless $R_j(\theta)=\{\theta\}$ for every $\theta\in\Theta$. For this reason, there is a limited interest in incorporating message spaces and/or cost functions in the definition of communication protocols, especially in light of the existence result in Proposition~\ref{prop:eqm}. Appendix~\ref{app:report_cost} indicates that all results hold under more general and abstract specifications of the senders' report space. There, an example shows that results carry over settings where the senders' report space contains three reports only. More broadly, the appendix shows that the findings in Section~\ref{sec:protocols} extend to a wider range of report spaces and associated misreporting costs than those outlined in Section~\ref{sec:model}. 

The analysis is focused on binary decision problems, which are prevalent in numerous social and economic contexts extensively examined in the literature: e.g., voting, judicial decision-making, market entry decisions.\footnote{Other examples of widely studied binary decision settings are: medical treatment choices (e.g., opting for surgery or not), hiring decisions, technology adoption (e.g., whether to implement a new technology), and consumer purchase decisions (e.g., whether to buy a product).} It is important to recognize the implications on the results when extending the model beyond the binary-action framework. \cite{emons2009accuracy} offer an example indicating that the paper's findings are also applicable to settings with a continuous action space.\footnote{See Section~4.2, p.~147 therein. In a similar model of costly communication with a continuous action space, \cite{emons2009accuracy} show that there exist a truthful equilibrium when two biased senders with opposed goals communicate sequentially. Even though it is not discussed, their equilibrium is naturally coalition-proof because of the conflict of interest between senders.} However, an element of \emph{limited liability}, which holds naturally under the binary-action framework, is crucial for the uniqueness result. Appendix~\ref{app:more_actions} extends the analysis to a three-action framework. It shows that the presence of a third, ``safe'' alternative can yield efficient outcomes under private advocacy protocols.  This extension suggests that, in frameworks with more than two actions, a ``liability condition'' is necessary to preserve the uniqueness result outlined in Section~\ref{sec:protocols}. Without liability constraints and with more than two actions, the appendix shows that two opposed-biased senders suffice to achieve efficiency, regardless of whether they are consulted simultaneously or sequentially.

In the model, the receiver cannot implement transfers or choose the senders' payoff structure.\footnote{Mechanisms that involve transfers are inefficient because, compared to the outcome under complete information, at least one player incurs a cost when participating in a transfer. Therefore, allowing for transfers does not affect the paper's results.} Even if the receiver could affect the senders' payoff structure within the limits prescribed in Section~\ref{sec:model}, the only way to obtain an efficient and collusion-proof outcome is by using public advocacy. If the receiver could, then it would set an environment with prohibitively high misreporting costs to spur truthful reporting. However, organizations may be limited when choosing among mechanisms, either because of exogenous constraints or commitment problems. The main result is positive: efficiency can be obtained even when organizations can only decide how to structure communication.

Two additional extensions are worth discussing. First, the model assumes that senders know the state perfectly. A model variant with imperfectly informed senders introduces an information aggregation problem alongside the strategic problem of information elicitation. The receiver's need to aggregate information from imperfectly informed agents typically makes protocols with a higher number of senders more appealing. Second, it is assumed that truthful reporting is costless. However, players may incur substantial consultation costs, making protocols with fewer senders more appealing. While these extensions are relevant in broader contexts, they play no role in the present analysis, as the model's notion of efficiency presumes informed decision-making and costless reporting. In most cases, the receiver cannot make informed decisions when senders lack knowledge of the state, and communication involving costly reports necessarily entails inefficiencies.

In minimal protocols, senders have the same relative bias configuration, they speak only once, and they are subject to the same type of confidentiality. Naturally, these rules do not have to apply to more general arrangements. I refer to non-minimal protocols as \emph{complicated}. Public advocacy is one among many efficient and coalition-proof protocols when considering also complicated ones. For example, we can construct a multi-stage protocol where public advocacy is used in the first stage. Knowing that the public advocacy stage induces truthful reporting, senders communicating in all other stages optimally economize in costs by reporting truthfully. Likewise, they cannot form profitable and self-enforcing deviating coalitions. This class of complicated protocols is large, efficient, and robust to collusion. Similarly, the paper does not analyze protocols with rebuttals, that is, where senders report more than once. The main reason for omitting rebuttals is that they are unnecessary to achieve efficiency.\footnote{Introducing rebuttals would require additional and non-trivial assumptions on the misreporting costs. For example, if such costs are purely psychological, then repeating the same lie may be more costly than lying once. By contrast, direct manipulation costs such as effort and time may not duplicate.} This is in contrast with \cite{krishna2001exp}'s result that rebuttals are necessary for full revelation in their cheap talk setup.\footnote{See \citet[Proposition~1, p.~756]{krishna2001exp}, where full revelation is necessary for efficiency.} The focus on minimal protocols avoids this type of redundancies with the idea that, all else equal, organizations prefer using simpler to complicated protocols.\footnote{Proposition~\ref{prop:sim} applies also to complicated private protocols where some pairs of senders are like-biased and others are opposed-biased. Similarly, Proposition~\ref{prop:eqm} and Corollary~\ref{cor:strong} equally apply to complicated public advocacy protocols with more than two senders, of which at least two are like-biased.}

There are other differences between the results obtained here and those in \cite{krishna2001exp}'s cheap talk model, which explores a more complex setup and addresses different questions. Most importantly, while this paper examines both information transmission and influence costs, the latter---captured by the sum $\sum_{i\in N}C_j(r_j,\theta)$---are not a focus in \cite{krishna2001exp}, as such costs are set to zero in cheap talk frameworks. This distinction highlights how the intuition for the main result in this paper cannot be similar to that in \cite{krishna2001exp}, as efficiency in public advocacy arises from the ability to deliver reports that, due to their costs, serve as unequivocal signals of the state's sign.\footnote{Intuitively, the introduction of misreporting costs should (weakly) increase both information transmission and expenditures in influence activities. Public advocacy is never efficient when talk is cheap because not enough information is transmitted in equilibrium.} Both approaches contribute to understanding the interplay between communication structures and outcomes in strategic settings.\footnote{Other differences are in the protocols examined and  results obtained. \cite{krishna2001exp} show that their rebuttal protocol can fully reveal information when senders are sufficiently aligned. By contrast, this paper demonstrates that efficiency can persist even under state-independent sender preferences, extending the analysis to scenarios where senders' biases are extreme.}


\section{Concluding remarks}\label{sec:conc}

This paper shows that there exists a simple institution and a signaling structure that can solve information asymmetries at no cost. At least since \cite{akerlof1970}, it is well known that the presence of asymmetric information can yield inefficient outcomes. Subsequent work by \cite{spence1973jobmarketsignaling} shows that signaling can resolve information asymmetries but does not eliminate inefficiencies, as signals are wasteful and unproductive. This paper departs from the canonical one-sender-one-receiver setting and uses a costly talk signaling structure. When considering multi-sender protocols, the concern for collusion naturally emerges. Yet, in this setting, it is possible to structure communication in a way that fully restores efficiency without using wasteful signaling expenditures or commitment power. Likewise, there is no need for mediation, arbitration, or other complex arrangements.

The main result has potentially significant implications for the understanding of organizational design. It shows that only one minimal protocol can achieve efficiency under the threat of senders' collusion. This protocol prescribes the sequential and public consultation of two informed agents with conflicting interests. The proposed arrangement has a plain structure and does not require commitment power as ex-ante, and in the interim, the organization adheres to the protocol. Importantly, such an arrangement \emph{always} yields an efficient outcome for any configuration  permitted by the model described in Section~\ref{sec:model}.\footnote{Appendix~\ref{app:app} shows that the results extend to more general setting than the one presented in Section~\ref{sec:model}.} This finding provides a rationale for using public advocacy structures. 

Public advocacy is widespread. As an example, consider the justice system. Trials take place with an adversarial procedure of judicial decision-making, whereby two advocates---prosecutors and defendants---engage in public debates in the courtroom. Other examples of public advocacy include, e.g., managers and ministries competing for budget allocation. Budgeting processes are typically sequential and adversarial. When the contending parties disagree, the receiver---judge, CEO, or prime minister---adjudicates.

The model has two distinguishing features central to the main result. First, it considers organizations which dislike wasteful influence activities. Public advocacy is uniquely optimal only if this is the case. Fully revealing equilibria exist in simultaneous protocols, and full revelation is sufficient for optimality when the organization does not incur direct influence costs. Second, senders can engage in non-binding, pre-play communication. Public advocacy is not the uniquely efficient protocol when collusion is not a concern: the proof of Proposition~\ref{prop:sim} shows that, in these cases, simultaneous and like-bias protocols can also yield efficient outcomes. It is reasonable to jointly assume a distaste for wasteful activities and concern for collusion for a wide range of organizations and central planners.


\appendix
\section{Appendix}\label{app:app}

\subsection{Different report spaces and cost functions}\label{app:report_cost}

This section discusses different specifications of the senders' report space. For the results to hold, it is crucial that misreporting costs are not prohibitively expensive, as already discussed in Section~\ref{sec:model_disc}. Propositions~\ref{prop:one} to \ref{prop:like} require the existence of a state and a set of affordable reports such that all senders with an incentive to persuade the receiver can do so by misreporting collectively. For example, the proofs for Propositions~\ref{prop:one} to \ref{prop:like} can carry through when the senders' report space coincides with the action space.\footnote{For every sender $j\in S$, set $R_j=\{a^+,a^-\}$, $C_j(a^+,\theta)=0$ for all $\theta\geq 0$, $C_j(a^-,\theta)=0$ for all $\theta<0$, and $0<C_j(\cdot,\theta')<|\Delta u_j(\theta')|$ for all $\theta'\in\Theta$ such that $\Delta u_j(\theta')\neq0$.}

More formally, suppose that the senders' report space, $R$, is a partition of the real line such that $0=\min r_j$ for some $r_j\in R$, $C_j(r_j,\theta)=0$ if and only if $\theta\in r_j$, and $C_j(r_j,\theta)>0$ otherwise. Reports are intervals, but the report space can be more abstract as long as we can associate an interval of the real line to each report.\footnote{The mapping states-reports occurs naturally through the cost function $C_j$. For example, if $C_j(r_j,\theta)=C_j(r_j,\theta')$ for some $\theta\neq \theta'$, then we can say that $\theta,\theta'\in r_j$. To disentangle positive from negative states, assume that the cost function $C_j$ is such that there is no $\theta<0$ and $\theta'\geq 0$ such that $C_j(r_j,\theta)=C_j(r_j,\theta')$.} The proofs of Propositions~\ref{prop:one} to \ref{prop:like} directly apply to this case as long as misreporting is sufficiently affordable in some cases. Specifically, define the set of senders that want to persuade the receiver in state $\theta$ as $S'(\theta)$. Deviations from truth-telling considered by the proofs in some state $\theta'$ must have an associated cost that is lower than $|\Delta u_j(\theta')|$ for all $j\in S'(\theta')$.

By contrast, the proof of Proposition~\ref{prop:eqm} requires the existence of reports that are prohibitively expensive (i.e., dominated) in certain states. In addition to the report space as described in the previous paragraph, we need a report $r_2^-$ such that $C_2(r_2^-,\theta)>|\Delta u_2(\theta)|$ for all $\theta\geq 0$, and $C_2(r_2^-,\theta)<|\Delta u_2(\theta)|$ for all $\theta\in(t,0)$, where $t<0$ is low enough.\footnote{Low enough means  $C_1(r_1,\theta)>|\Delta u_1(\theta)|$ for all $\theta\leq t$ and all $r_1$ such that $\theta'\in r_1$ for some $\theta'\geq 0$.} Denote by $r_j^\theta$ the report such that $\theta\in r_j^\theta$, and thus $C_j\left(r_j^\theta ,\theta\right)=0$ for all $\theta\in\Theta$ and $j\in\{1,2\}$. The equilibrium in Proposition~\ref{prop:eqm} can be rewritten as follows: $\rho_1(\theta)=r_1^\theta$ for all $\theta\in\Theta$; $\rho_2(r_1,\theta)= r_2^-$ if $\theta<0$ and $\theta\notin r_1$, and $\rho_2(r_1,\theta)= r_2^\theta$ otherwise. Beliefs $p$ are such that $\beta(r_1,r_2)=a^+$ if $\theta\geq 0$ for all $\theta\in r_1$ and $C_2(r_2,\theta)<|\Delta u_2(\theta)|$ for some $\theta\geq 0$;  $\beta(r_1,r_2)=a^-$ otherwise. The proof of Proposition~\ref{prop:eqm} applies directly to this case as well.

For example, suppose the report space is $R=\{a_-^-,a^-,a^+\}$, with $C_j(a^+,\theta)=C_j(a^-,\theta')=0$ for all $\theta\geq 0$, $\theta'<0$, and $j\in\{1,2\}$. We can say that $\theta\in r_j$ if and only if $C_j(r_j,\theta)=0$. For sender~2, the report $r_2=a_-^-$ cannot be profitably delivered when the state is non-negative, as $C_2(a_-^-,\theta)>|\Delta u_2(\theta)|$ for all $\theta \geq 0$. However, $C_2(a_-^-,\theta)<|\Delta u_2(\theta)|$ for all $\theta < 0$. There is an efficient equilibrium where sender~1 reports $r_1=a^+$ when $\theta\geq 0$, and $r_1=a^-$ otherwise. Sender~2 delivers $r_2=a_-^-$ if and only if $\theta<0$ and $r_1 = a^+$. Otherwise, it delivers $r_2=a^+$ when $\theta\geq 0$ and $r_2=a^-$ when $\theta<0$. Beliefs are such that the receiver takes action $a^+$ if and only if $r_1=a^+$ and $C_2(r_2,\theta)<|\Delta u_2(\theta)|$ for some $\theta\geq 0$. Otherwise, the receiver selects $a^-$. These strategies and beliefs constitute an efficient equilibrium.

\subsection{More than two actions}\label{app:more_actions}

This section discusses the applicability of the paper's findings to settings with more than two actions. The binary-actions modeling choice introduces an element of limited liability: the receiver cannot discipline opposed-biased senders with the threat of taking actions that hurt them both when they deviate from truthful reporting. Punishing one sender would simultaneously reward the other one. Consider an extension of the baseline model where the receiver has an additional, ``safe'' alternative, $a^*$ that yields to each player a fix, state-independent payoff. The following example illustrates that the receiver may exploit the use of additional alternatives to discipline senders' behavior with the goal of sustaining truthful reporting, and thus inducing efficient outcomes.

Consider a \emph{simultaneous} communication protocol with two opposed-biased senders. Sender~$1$'s net payoff is such that $u_1(a^+,\theta)=1$ and $u_1(a^-,\theta)=0$ for all $\theta\in\Theta$. Sender~$2$'s net payoff is such that $u_2(a^+,\theta)=0$ and $u_2(a^-,\theta)=1$ for all $\theta\in\Theta$. The receiver obtains a payoff of~$1$ (resp.~$-1$) when selecting~$a^+$ (resp.~$a^-$) in non-negative states, or~$a^-$ (resp.~$a^+$) in negative states. In addition, the receiver can select a third alternative, $a^*$, which grants player~$i$ with a state-independent payoff of $v_i = u_i(a^*,\theta)$ for every $\theta\in\Theta$. To make the problem non-trivial, assume that $v_r<1$.

Suppose that this protocol supports an efficient equilibrium, where $r_1=r_2=\theta$ for every $\theta\in\Theta$. Off-path beliefs following $r_1\neq r_2$ are such that the state is non-negative with some probability $q(\mathbf{r})\in(0,1)$, where $\mathbf{r}=(r_1,r_2)$. When off-path, the receiver's expected utility  is $2q(\mathbf{r})-1$ from selecting $a^+$, and $1-2q(\mathbf{r})$ from choosing $a^-$. Consider the case where off-path beliefs are report-independent, that is, $q(\mathbf{r})=q$ for every $r_1\neq r_2$.

Let's introduce two further assumptions about preferences over the additional alternative: (i) $a^*$ is the receiver's preferred alternative when senders deliver different reports, i.e.,  $v_r> \max\{2q-1,1-2q\}$; (ii) $a^*$ is the senders' least-favorite option regardless of the state, that is, $v_i<0$ for $i\in\{1,2\}$. 

This extension supports an efficient equilibrium. Any deviation from truthful reporting results in the worst possible outcome for the senders. When reports do not coincide, the receiver selects $a^*$, and this threat is credible because choosing $a^*$ is sequentially rational in the postulated equilibrium of this example. Thus, a third alternative can enable efficient outcomes even under simultaneous protocols. Since the receiver can discipline senders to deliver matching reports but cannot verify their truthfulness, this mechanism is effective only with oppositely biased senders. By contrast, like-biased senders can coordinate persuasion by submitting matching reports, as shown in Section~\ref{sec:protocols}.

Intuitively, introducing a third alternative can undermine liability constraints inherent in the baseline binary-action framework. When that is the case, two oppositely biased senders are sufficient to achieve efficiency, regardless of whether communication is sequential or simultaneous. However, the paper's main result can still hold in settings with multiple actions, provided these additional alternatives meet certain conditions that preserve an element of limited liability. A straightforward case where public advocacy remains the only efficient minimal protocol is when the additional actions are never sequentially rational, both on and off the equilibrium path.


\addcontentsline{toc}{section}{References}
\bibliographystyle{apacite}
\bibliography{biblio_sequential}
\end{document}